\documentclass[conference]{IEEEtran}
\IEEEoverridecommandlockouts
\usepackage{amsthm}

\newtheorem{thm}{Theorem}[section]
\newtheorem{lem}[thm]{Lemma}
\newtheorem{prop}[thm]{Proposition}

\newtheorem{rem}{Remark}

\usepackage{quantikz}
\usepackage{float}
\usepackage{url}
\makeatletter
\let\newfloat\newfloat@ltx
\makeatother
\usepackage{algcompatible}
\usepackage{amsmath,amssymb}
\usepackage{comment}
\usepackage{eso-pic}

\def\BibTeX{{\rm B\kern-.05em{\sc i\kern-.025em b}\kern-.08em
    T\kern-.1667em\lower.7ex\hbox{E}\kern-.125emX}}
\usepackage{cite}

\begin{document}

\AddToShipoutPictureBG*{%
  \AtPageLowerLeft{%
    \put(0,25){\parbox{\paperwidth}{\centering\footnotesize
       \copyright~2026 IEEE. Accepted at IEEE QAI 2026. Personal use of this material is permitted. Permission from IEEE must be obtained for all other uses.}}}}

\title{
Verifiable Sampling for Delegated Generative Quantum Machine Learning
}

\author{\IEEEauthorblockN{Matteo Inajetovic}
\IEEEauthorblockA{\textit{Electrical Engineering and}\\
\textit{Computer Science Department}\\
\textit{Technische Universit\"at Berlin}\\
Berlin, Germany \\
m.inajetovic@tu-berlin.de}
\and
\IEEEauthorblockN{Anna Pappa}
\IEEEauthorblockA{\textit{Electrical Engineering and}\\
\textit{Computer Science Department}\\
\textit{Technische Universit\"at Berlin}\\
Berlin, Germany \\
anna.pappa@tu-berlin.de}
\and
\IEEEauthorblockN{Petros Wallden}
\IEEEauthorblockA{\textit{Quantum Software Lab}\\
\textit{School of Informatics}\\
\textit{University of Edinburgh}\\
Edinburgh, United Kingdom \\
petros.wallden@ed.ac.uk}
}

\maketitle

\begin{abstract}
    Generative machine learning has shown remarkable results across a wide range of interdisciplinary tasks, motivating significant efforts toward Quantum Generative Models as a path to beyond-classical performance, given their capability of generating classically intractable distributions. However, comparatively little attention has been paid to securely delegating such computations to a powerful remote quantum server. 
    In this work, we propose a blind and verifiable protocol for sampling that can be used for the inference phase of such quantum models. Such a protocol inherits blindness from Universal Blind Quantum Computing and achieves verifiability by exploiting known approaches based on trapification schemes. 
    Alongside, to delegate the training phase, we discuss an extension of the previously proposed verifiable end-to-end Delegated Variational Quantum Algorithms (Inajetovic et al., Phys. Rev. Research, 2026).
    Beyond the usual Accept/Abort outcome characterizing verification protocols, the sampling protocol for inference also provides, even against a malicious server, an estimation of the amount of corrupted generated samples, i.e., a fine-grained feature describing the quality of the outcome connecting verification protocols and application-level benchmarking.

\end{abstract}

\begin{IEEEkeywords}
Quantum Generative Models, Delegated Quantum Computing, Quantum Verification
\end{IEEEkeywords}

\section{Introduction}

Quantum Generative Models (QGMs) have been proposed to learn classical and quantum target distributions \cite{PhysRevA.98.062324, Benedetti_2019} motivated by the success of classical generative models in deep learning and by potential theoretical advantages \cite{pirnay2024unconditionaldistributionlearningadvantage,gyurik2024exponentialseparationsclassicalquantum}.
Rudolph et al.~\cite{Rudolph_2024} analyzed the challenges related to variational QGMs, investigating trainability issues and providing potential improvements. Furthermore, recent works explored new methods and approaches to tackle training limitations and get closer to useful QGMs applications \cite{Coopmans_2024,Mart_nez_de_Lejarza_2025,Demidik_2025}.  
Given the notable progress in the QML's field of dequantization, recent works have highlighted quantum advantages in scenarios where the training can be simulated efficiently classically \cite{huang2025generativequantumadvantageclassical,recioarmengol2025trainclassicaldeployquantum,kolarovszki2026generativemodelinggaussianboson}. 
Nevertheless, such train-classical deploy-quantum strategies have been shown to have limitations, namely discrepancies when deploying quantumly a model trained previously classically for certain scenarios \cite{herrerogonzalez2025bornultimatumconditionsclassical}.
In the context of measurement-based quantum computing (MBQC) \cite{raussendorf_one-way_2001} a QGM exploiting the probabilistic nature of MBQC has been proposed by Majumder et al.~\cite{Majumder_2024}.
Although there is a huge effort in investigating QGMs, the same does not hold concerning protocols to securely delegate these models to a remote quantum server. 
Delegated quantum computation provides a framework in which a computationally weak client can securely delegate a quantum computation to a more powerful quantum server while ensuring guarantees such as correctness, blindness, and, in some settings, verifiability. MBQC-based foundational protocols, such as universal blind quantum computation (UBQC) \cite{ubqc} establish information-theoretic blindness, while later works introduce verifiability \cite{UVBQC17} and robustness against noise for BQP computations \cite{vbqp}. More recent developments extend the delegated paradigm to finer-grained tasks, such as verifiable end-to-end delegated variational quantum algorithms (DVQAs) \cite{dvqas} and composably secure protocols for delegated estimation of expectation values \cite{yang2025verifiableblindobservableestimation}, which are particularly relevant for near-term quantum applications.
Additionally, Chung et al.~\cite{10.1007/978-3-031-07082-2_25} provide a verifiable and blind sampling protocol that could be used to delegate quantum generative models. Nevertheless, it is based on the known Mahadev protocol \cite{8555111}, hence its security is based on computational assumptions rather than information-theoretical ones and it is more complicated to run on near-term devices.
Accreditation protocols bound the error of the outputs of noisy quantum devices using trap circuits~\cite{Ferracin_2019,Mills_2026}, but assume an honest device whose noise satisfies suitable modelling assumptions, whereas our setting allows for an adversarial server.
In this work we provide a general verifiable delegated sampling protocol, which we apply to the inference phase of QGMs. For training, whereas verifiable end-to-end DVQAs are limited to expectation-value losses with Parameter-Shift-Rule gradients, we discuss a generalization of verifiable training to QGMs beyond the DVQA requirements, covering the full pipeline from training to inference.
To support our findings, we also perform numerical simulations  using Veriphix \cite{Abdul_Sater_Veriphix}, a library based on the measurement-based quantum software Graphix \cite{graphix26}.


\section{Background}
Two main features are essential to delegate quantum computations, namely blindness and verifiability.
The former allows for input, circuit, and output privacy, while the latter ensures the client that the computation carried out by the potentially malicious server has been done correctly, returning a low probability to accept a wrong computation.
In the DVQAs framework \cite{dvqas}, blindness is inherited directly from UBQC \cite{ubqc}, ensuring that all delegated computations hide inputs, intermediate angles, and outputs from the server.
Verifiability instead is achieved through a step-level verification protocol ensuring correctness of the gradient updates and end-to-end verifiability based on convergence guarantees for the optimization. In detail, the Verifiable Step Protocol verifies each gradient update via:

\begin{enumerate}
    \item \textit{Bounding corruption via expectation-value deviation} - 
    Establishing how corrupted measurement rounds affect the Parameter-Shift-Rule \cite{Mitarai_2018} gradient estimation yields a bound connecting the number of corrupted rounds to the relative gradient error (Lemma III.1 in \cite{dvqas}).
    This connection allows one to define a failure probability, for the case when a wrong computation is accepted.
    \item \textit{Trap-based verification of gradient delegation} -
    A trapification protocol interleaving computation and test rounds detects adversarial behaviour with failure probability decreasing exponentially in the number of rounds (Theorem III.2 in \cite{dvqas}).
\end{enumerate}
The Verifiable Optimization Protocol then runs a custom gradient-descent optimizer (Protocol \ref{prot:Opt}) that delegates each step through the Verifiable Step Protocol. Choosing appropriate learning rate and error thresholds guarantees geometric convergence, which in turn ensures full-algorithm verifiability (Theorem III.3 in \cite{dvqas}).

\vspace{-.2cm}
\noindent\rule{\linewidth}{0.4pt}
\vspace{-.7cm}
\begin{center}
    Protocol 1 : \textbf{Verifiable Optimization (from \cite{dvqas})}
\end{center}
\vspace{-.5cm}
\noindent\rule{\linewidth}{0.4pt}


\begin{algorithmic}[1]
 \renewcommand{\algorithmicrequire}{\textbf{Input:}}
 \renewcommand{\algorithmicensure}{\textbf{Protocol:}}

 \REQUIRE Number of optimization steps $N_{steps}$, learning rate $\alpha$, error threshold $\varepsilon_{th}$.  
 \ENSURE  
  \FOR {$k$ in $N_{steps}$}:   \label{prot:Opt}
  \STATE Delegate the Verifiable Step Protocol that gets $\hat g(\theta^k)$ and the output signal:
  \IF{Accepts}
  \STATE $\theta^{k+1}=\theta^{k}-\alpha \hat{g}(\theta^{k})$.
  \ELSIF {Aborts}
  \STATE $\theta^{k+1}=\theta^{k}$ (i.e. re-run the same step).
  \ENDIF
  \ENDFOR
\vspace{.1cm}
\end{algorithmic} 
\vspace{-.4cm}
\noindent\rule{\linewidth}{0.4pt}

In this work, building on \cite{dvqas}, we tackle training and inference phases for variational generative modeling algorithms. We address the former by using the circuit's output probability distribution instead of expectation values for the gradient approximation, and the latter by extending the two steps of the Verifiable Step Protocol to the circuit's sampling output.

\section{Methods}

\subsection{Delegated Sampling}

In order to delegate the inference phase of a QGM we build on top of the same trapification scheme used in \cite{vbqp,dvqas,yang2025verifiableblindobservableestimation}.
Let $\mathcal{C}$ be a quantum circuit producing as outcome a probability distribution $p$ with $d$ shots and $\delta$ corrupted shots.
While standard VBQC \cite{UVBQC17} could in principle be applied per shot, this would be significantly more resource-demanding; our trapification-based approach amortizes the verification overhead across rounds, as discussed in \cite{dvqas}.
Given $\gamma=\delta/d$ then  $\hat p = (1-\gamma)p + q\gamma$, where $q$ represents the outcome probabilities of the corrupted shots. Hence the loss function with respect to the non-corrupted scenario will be:
    \[
    \mathcal{L}(p,\hat p)=\mathcal{L}(p, (1-\gamma)p + q\gamma)\leq h_{\mathcal{L}}(\gamma)
    \]
where $h_{\mathcal{L}}$ is a function depending on the ratio of the corrupted shots.
When choosing $\mathcal{L}$ s.t. $\mathcal{L}(p,q)=0$ only if $p=q$, then
\[
h_{\mathcal{L}}(\gamma) = \gamma\,\mathcal{L}(p,q).
\]
This condition holds when $\mathcal{L}$ is the Maximum Mean Discrepancy (MMD)
\cite{MMD12} or the Total Variation Distance (TVD), but not in general: the
Kullback--Leibler divergence $D_{KL}$, for instance, involves a logarithm of
the ratio $p/\hat{p}$, which does not factorize linearly in $\gamma$ and
therefore requires a separate bound. Our goal is to analyze the worst-case
distribution $q$ and find the corresponding upper bound for
$h_{\mathcal{L}}$. The latter enables us to determine whether a computation
can be declared wrong due to the amount of corrupted samples.

\begin{lem} Let $\mathcal{L}$ be the Maximum Mean Discrepancy $\text{MMD}_k$ with a bounded kernel $k$. In presence of $\delta$ corrupted shots over $d$ total shots, the absolute difference between the ideal loss and the corrupted one is bounded as follows:
    \[|\mathcal{L}(p,\hat p)|\leq 2\sqrt K\gamma. \]
where $K$ is the kernel bound.
\label{lemma:mmd}
\end{lem}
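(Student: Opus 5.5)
We will work in the reproducing kernel Hilbert space $\mathcal{H}_k$ associated with the bounded kernel $k$, with $\sup_{x,y}|k(x,y)|\le K$. The maximum mean discrepancy is then the RKHS norm of the difference of kernel mean embeddings,
\[
\text{MMD}_k(p,\hat p)=\|\mu_p-\mu_{\hat p}\|_{\mathcal{H}_k},\qquad \mu_p=\mathbb{E}_{x\sim p}[k(x,\cdot)].
\]
Our plan has three steps: exploit the linearity of the embedding in the distribution, apply the triangle inequality, and bound each embedding norm by $\sqrt K$.

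\textbf{Step 1 (linearity).} Since $\hat p=(1-\gamma)p+\gamma q$ and the map $p\mapsto\mu_p$ is linear, we have
\[
\mu_p-\mu_{\hat p}=\mu_p-(1-\gamma)\mu_p-\gamma\mu_q=\gamma(\mu_p-\mu_q).
\]
Taking norms gives $\mathcal{L}(p,\hat p)=\gamma\,\text{MMD}_k(p,q)$. This is exactly the identity $h_{\mathcal{L}}(\gamma)=\gamma\mathcal{L}(p,q)$ anticipated in the text, so the problem reduces to bounding the worst case over $q$ of $\text{MMD}_k(p,q)$.

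\textbf{Step 2 (triangle inequality and embedding norms).} By the triangle inequality, $\|\mu_p-\mu_q\|\le\|\mu_p\|+\|\mu_q\|$. It remains to bound $\|\mu_r\|$ for an arbitrary distribution $r$. By the reproducing property,
\[
\|\mu_r\|^2_{\mathcal{H}_k}=\mathbb{E}_{x,x'\sim r}[k(x,x')]\le K.
\]
Alternatively, Jensen's inequality gives $\|\mu_r\|\le\mathbb{E}_{x\sim r}\|k(x,\cdot)\|=\mathbb{E}_{x\sim r}\sqrt{k(x,x)}\le\sqrt K$. Either route yields $\|\mu_r\|\le\sqrt K$.

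\textbf{Step 3 (combine).} Putting the two steps together,
\[
|\mathcal{L}(p,\hat p)|=\gamma\,\text{MMD}_k(p,q)\le\gamma\bigl(\sqrt K+\sqrt K\bigr)=2\sqrt K\,\gamma,
\]
uniformly in the adversarial distribution $q$. The absolute value in the statement is harmless, since MMD is nonnegative.

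There is no real obstacle here. The only point needing care is the convention for MMD. If $\mathcal{L}$ is taken to be the squared MMD, the same computation gives $\gamma^2\|\mu_p-\mu_q\|^2\le 4K\gamma^2$. Since $\gamma\le1$ and $\sqrt K\le K$ once $K\ge1$, this can still be brought under $2\sqrt K\gamma$ for normalized kernels, but that requires an extra assumption. We therefore state the proof for the unsquared, RKHS-norm definition, under which the bound holds directly, and remark on the squared case.
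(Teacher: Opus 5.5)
Your proof is correct and is essentially the paper's argument: linearity of the mean embedding gives $\text{MMD}_k(p,\hat p)=\gamma\,\text{MMD}_k(p,q)$, the triangle inequality splits $\|\mu_p-\mu_q\|_{\mathcal{H}}$, and $\|\mu_r\|^2_{\mathcal{H}}=\mathbf{E}_{x,x'}[k(x,x')]\le K$. The one difference is minor: the paper assumes only $k(x,x)\le K$ and gets the off-diagonal bound by Cauchy--Schwarz, which for a positive-definite kernel is equivalent to your $\sup|k|\le K$.

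One caution about your closing aside on squared MMD, which does not affect the main proof. The inequality $4K\gamma^2\le 2\sqrt K\gamma$ needs $2\sqrt K\gamma\le 1$, so $\sqrt K\le K$ points the wrong way. Even for a normalized kernel ($K=1$), the squared version fails when $\gamma>1/2$: the linear kernel $k(x,y)=xy$ on $\{\pm1\}$ with $p=\delta_{+1}$, $q=\delta_{-1}$ gives $\text{MMD}_k^2(p,\hat p)=4\gamma^2$.
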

 
\begin{proof}
   From \cite{MMD12} and the triangle inequality,
   \[
   \text{MMD}_k(p,q)=\|\mu_p-\mu_q\|_{\mathcal{H}}\leq\|\mu_p\|_{\mathcal{H}}+\|\mu_q\|_{\mathcal{H}}.
   \]

Assuming $k(x,x)\leq K<\infty \text{ for all } x$ we have that:
    \begin{align*}
         \mathbf{E}_{x,x'}[k(x,x')]&=\mathbf{E}_{x,x'}[\langle \phi(x),\phi(x')\rangle_{\mathcal{H}}]\\&\leq \mathbf{E}_{x,x'}[\sqrt{k(x,x)k(x',x')}] \leq K
    \end{align*}
    where the bound comes from the Cauchy-Schwarz inequality.
Then:
\begin{align*}
    \|\mu_p\|^2_{\mathcal{H}}&=\langle\mu_p,\mu_p\rangle_{\mathcal{H}}\\&=\mathbf{E}_{x,x'}[\langle\phi(x),\phi(x')\rangle_{\mathcal{H}}]=\mathbf{E}_{x,x'}[k(x,x')]\leq K
    \end{align*}
       Therefore $ \text{MMD}_k(p,q)\leq 2\sqrt K$ and since $\mu_{\hat{p}} = (1-\gamma)\mu_p + \gamma\mu_q$ by linearity of expectation, we have $\text{MMD}_k(p,\hat{p}) = \gamma \text{MMD}_k(p,q)$, and hence:
    \begin{align*}
       |\text{MMD}_k(p,\hat p)| &\leq\gamma\text{MMD}_k(p,q)\leq 2\sqrt K\gamma.
    \end{align*} 
\end{proof}
\vspace{-.8cm}

\begin{rem}
Analogous bounds hold for other common distance measures. 
If we consider the Total Variation distance  $\text{TVD}(r,s)=\frac{1}{2}||r-s||_{1}$, then
$\text{TVD}(p,\hat p)=\gamma\,\text{TVD}(p,q)\leq\gamma$, by linearity and using
$\text{TVD}(p,q)\leq 1$.
For the Kullback--Leibler divergence $D_{KL}$ the argument
differs. Given $D_{KL}(a,b)=\sum_x a(x)\log\frac{a(x)}{b(x)}$ and $\hat p\geq(1-\gamma)p$, we get $D_{KL}(p,\hat p)\leq-\log(1-\gamma)$. In all three cases $h_{\mathcal{L}}(\gamma)\to 0$ as $\gamma\to 0$, which is the only property
required by Protocol~\ref{prot:Sampling} and Theorem~\ref{th:ver_sampl}.
\end{rem}
\begin{rem}
    The bound in the previous Lemma characterizes the bias introduced by corrupted rounds and is saturated in the limit of infinitely many shots. In practice, estimating $\hat{p}$ from $d$ finite shots introduces an additional variance term for the distance measures. Throughout the analysis we treat $d$ as sufficiently large so that this shot-noise contribution is negligible relative to $\varepsilon_{th}$, while  
a complete treatment requires absorbing it into the bound:
\[
 h'_{\mathcal{L}}(\gamma) :=  h_{\mathcal{L}}(\gamma)+\text{S}_{\mathcal{L}}(d)
\]
where $\text{S}_{\mathcal{L}}(d)$ denotes the distance error due to shot-noise.
\end{rem}

\textbf{Protocol}
In case we have a desired error threshold $\varepsilon_{th}$, then we set $\varepsilon_{th}=h_{\mathcal{L}}(\gamma_{max})$ where $\gamma_{max}$ is the maximum tolerable ratio of corrupted shots. We propose the following protocol:
\vspace{-.1cm}

\noindent\rule{\linewidth}{0.4pt}
\vspace{-.7cm}
\begin{center}
    Protocol 2 : \textbf{Verifiable Sampling}
\end{center}
\vspace{-.5cm}
\noindent\rule{\linewidth}{0.4pt}

 \begin{algorithmic}[1]
 \renewcommand{\algorithmicrequire}{\textbf{Input:}}
 \renewcommand{\algorithmicensure}{\textbf{Protocol:}}

 \REQUIRE Computational graph $G$ of a quantum circuit $\mathcal{C}$, associated $c$-coloring, distance function $\mathcal{L}$ and error threshold target $\varepsilon_{th}$.

 \ENSURE  
 
 \\ \textit{Initialization} : Suppose $n=d+t$ the number of rounds, where $d$ is the number of computation rounds and $t$ the number of test rounds. The client chooses at random which round belongs in each type.   
\\ \textit{Delegation} \label{prot:Sampling}:
  \FOR {$i$ in $n$}: 
  \IF {$i \in T$}
  \STATE the client sends the trap and dummy qubits.
  \ELSE
  \STATE the client sends the qubits associated with $G$.
  \ENDIF
  \STATE The server performs UBQC and returns the output.
  \ENDFOR
  \\ \textit{Outcome :}
  \STATE The client gathers and decodes the server's outputs:
  \IF {there are trap failures}:
  \STATE Store the amount of trap failures ($N_T$) and \textbf{Abort}
  \ELSE
  \STATE \textbf{Accepts} and obtains $\hat p$ with $|\mathcal{L}(\hat p, p)|<\varepsilon_{th}$.
  \ENDIF
\vspace{.1cm}
\end{algorithmic} 
\vspace{-.4cm}
\noindent\rule{\linewidth}{0.4pt}

The following theorem provides verifiability for the sampling protocol.

\begin{thm}[Verifiable Sampling]
 Let $\mathcal{L}$ be a distance measure s.t. $|\mathcal{L}(\hat p, p)|\leq h_{\mathcal{L}}(\gamma)$. Given a quantum circuit $\mathcal{C}$ with output probability distribution $p$ in the honest scenario, Protocol \ref{prot:Sampling} samples an output probability distribution with absolute error up to $\varepsilon_{th}=h_{\mathcal{L}}(\gamma_{max})$ and probability of failure decreasing exponentially with the number of rounds $n$. 
\label{th:ver_sampl}
\end{thm}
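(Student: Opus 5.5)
The proof splits into two parts. The first is a deterministic bridge from corrupted computation rounds to the loss. The second is a probabilistic bound showing that a server corrupting many computation rounds is caught by the traps, except with exponentially small probability.

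\emph{Step 1 (deterministic bridge).} Let $\delta$ be the number of corrupted computation rounds among the $d$ computation rounds, and set $\gamma=\delta/d$. The empirical outcome is then $\hat p=(1-\gamma)p+\gamma q$ for some distribution $q$ chosen by the adversary. By hypothesis $|\mathcal{L}(\hat p,p)|\leq h_{\mathcal{L}}(\gamma)$, which for MMD is exactly the content of Lemma~\ref{lemma:mmd}; the subsequent Remark gives the analogous bounds for TVD and $D_{KL}$. We choose $\varepsilon_{th}=h_{\mathcal{L}}(\gamma_{max})$ and assume $h_{\mathcal{L}}$ is monotone nondecreasing, which holds in all three cases: $2\sqrt{K}\gamma$, $\gamma$ and $-\log(1-\gamma)$. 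Then the event ``accept but $|\mathcal{L}(\hat p,p)|>\varepsilon_{th}$'' is contained in the event ``accept and $\delta>\gamma_{max}d$''. Verifiability therefore reduces to bounding the probability that the client accepts while more than $\gamma_{max}d$ computation rounds are corrupted.

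\emph{Step 2 (reduction to a deviation attack).} We invoke blindness inherited from UBQC. Since the client chooses at random which rounds are computation rounds and which are test rounds, the server's view is independent of the round types. By the standard twirling argument of the trapification scheme \cite{vbqp,dvqas}, any deviation reduces to a convex combination of Pauli attacks on each round. A round attacked by a nontrivial Pauli that could corrupt a computation round is detected in a test round with probability at least $\epsilon_{\rm det}\geq 1/c$, determined by the $c$-coloring. Without loss of generality, then, the server chooses a set $A$ of attacked rounds, and each attacked test round independently triggers a trap with probability at least $\epsilon_{\rm det}$.

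\emph{Step 3 (concentration).} For the bad event we need $|A\cap D|>\gamma_{max}d$. Because the partition into $D$ and $T$ is a uniformly random subset of size $d$ out of $n=d+t$, the count $|A\cap T|$ is hypergeometric. Hence $|A|$ large and $|A\cap T|$ small is exponentially unlikely. Concretely:
\begin{itemize}
\item If $|A|>\gamma_{max}d$, Hoeffding's inequality for sampling without replacement gives $|A\cap T|\geq \tfrac{t}{n}|A|-\eta n$ except with probability $e^{-2\eta^2 n}$.
\item Conditioned on that event, all attacked test rounds pass with probability at most $(1-\epsilon_{\rm det})^{|A\cap T|}$.
\end{itemize}
Taking $d,t$ proportional to $n$, both terms are $e^{-\Omega(n)}$, as in Theorem III.2 of \cite{dvqas}. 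Combining with Step 1, the client accepts an output with $|\mathcal{L}(\hat p,p)|>\varepsilon_{th}$ with probability at most $e^{-\Omega(n)}$. In the honest case no trap fails, so correctness is immediate.

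\emph{Main obstacle.} The hardest step is Step 2: making rigorous that an arbitrary coherent, cross-round deviation reduces to independent per-round Pauli attacks with a uniform detection probability. I would take this directly from the existing trapification analysis \cite{vbqp,dvqas} rather than reprove it. A secondary point is that $\hat p$ is built from finite shots. Following the second Remark, I would treat $d$ as large or replace $h_{\mathcal{L}}$ by $h'_{\mathcal{L}}$.
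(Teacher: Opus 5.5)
Your proposal is correct and follows the paper's route. The paper's sketch also uses the deterministic bridge $\delta\le\delta_{max}\Rightarrow|\mathcal{L}(\hat p,p)|\le\varepsilon_{th}$ (you make explicit the monotonicity of $h_{\mathcal{L}}$ that the paper uses implicitly), then bounds $\max_m\Pr[X_d\geq\delta_{max}\land X_t<1]$ via the trapification analysis of Theorem III.2 in \cite{dvqas}; the only difference is that you spell out that final step (blindness, Pauli twirl, hypergeometric concentration and the $(1-1/c)^{|A\cap T|}$ detection bound) where the paper simply cites \cite{dvqas}.
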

\begin{proof}[Sketch Proof]
The proof is similar to the one of Theorem III.2 in \cite{dvqas}; the main difference is the way one obtains the ratio between the allowed corrupted computation rounds $\delta_{max}$ and the total computation rounds $d$, namely $\gamma_{max}$. 
The outcome of Protocol \ref{prot:Sampling} is the output probability distribution obtained by sampling the quantum circuit $\mathcal{C}$. Therefore, the aforementioned ratio is derived from the error $|\mathcal{L}(\hat p, p)| \leq h_{\mathcal{L}}(\gamma_{max})=\varepsilon_{th}$ where having $\delta\leq\delta_{max}$ implies
$|\mathcal{L}(\hat p, p)|\leq\varepsilon_{th}$ and a computation is declared
wrong whenever $\delta>\delta_{max}$.
Given this modification, we can obtain an upper bound on the probability of failure:
\begin{align*}
\text{Pr}(\text{fail}) &\leq  \max_{m \in [0, n]}
\text{Pr}[ |\mathcal{L}(\hat p, p)| \geq \varepsilon_{th} \text{ } \land \text{ Client Accepts}]\\&\leq 
\max_{m \in [0, n]}\text{Pr}[X_d \geq \delta_{max} \land X_t < 1] 
\label{eq:prob_fail}
\end{align*}

where $m$ is the number of attacked rounds, and $X_d$ and $X_t$ are the random variables counting the attacked computation rounds and the attacks detected by the traps, respectively.

From this point on, showing that $\text{Pr}(\text{fail})$ decreases exponentially with the number of rounds $n$ follows directly from the proof in \cite{dvqas}.
\end{proof}

In order to provide verifiability we set an error threshold over which we declare the outcome to be wrong. Additionally, we use the number of failures in the test rounds as a diagnosis for a more fine-grained analysis of the outcome's quality.
Since the ``Accept/Abort" decision is taken after the delegation of all the rounds, we underline that the protocol, independently of the final decision, can also return an estimation of the error. This is inherently given by the statistical relation between the number of trap detections in the test rounds and the corrupted computational rounds. 
The following proposition provides a quality measure for the outcome, analogous to accreditation bounds~\cite{Ferracin_2019,Mills_2026} but holding in the adversarial delegated setting, and strengthening the link between the verification protocol and potential application-level benchmarking.

\begin{prop}
    The confidence interval $\left[\gamma_{l},\gamma_u\right]$ for the rate of corrupted computation rounds contains the true value with probability at least $1-\alpha$, with a margin of error $c\sqrt{\frac{1}{2t}\ln(2/\alpha)}$.
     \label{prop:est}
\end{prop}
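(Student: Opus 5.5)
The approach is to estimate the corruption rate from the observed trap failures, using the fact that the client chooses the round types uniformly at random. An attacked round is then equally likely to be a test round or a computation round. Moreover, conditioned on a test round being attacked, the attack is detected with probability at least $1/c$, where $c$ is the number of colours in the $c$-colouring used for the trapification scheme of \cite{vbqp,dvqas}. The margin of error in the statement therefore comes from a Hoeffding bound on the empirical detection frequency in the $t$ test rounds, rescaled by the factor $c$.

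\textbf{Step 1: a per-round indicator.} For each test round $j\in T$ let $Y_j\in\{0,1\}$ indicate a trap failure, so $N_T=\sum_j Y_j$. Because the round types are chosen at random and blindness (inherited from UBQC) hides them from the server, the server's deviation acts identically in distribution on test and computation rounds. Write $\gamma$ for the probability that a round is corrupted. Following the trap analysis used for Theorem III.2 of \cite{dvqas}, an attacked test round triggers a trap with probability $\epsilon\ge 1/c$. Then $\mathbf{E}[Y_j]=\epsilon\gamma$ and $\mathbf{E}[Y_j]\ge \gamma/c$. I will define the estimator $\hat\gamma=cN_T/t$. I will also state explicitly that this is conservative: it is an estimate of an upper bound on $\gamma$, and it is exact when each colour traps with probability exactly $1/c$.

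\textbf{Step 2: concentration.} Treating the $Y_j$ as independent bounded variables, with the server's strategy fixed i.i.d. across rounds or handled by an exchangeability argument over the random placement of the rounds, Hoeffding's inequality gives
\[
\Pr\left[\left|\tfrac{N_T}{t}-\mathbf{E}[Y]\right|\ge \eta\right]\le 2e^{-2t\eta^2}.
\]
Setting the right-hand side equal to $\alpha$ gives $\eta=\sqrt{\frac{1}{2t}\ln(2/\alpha)}$. Multiplying through by $c$ yields $\gamma_{l}=\hat\gamma-c\eta$ and $\gamma_u=\hat\gamma+c\eta$, which together contain $c\,\mathbf{E}[Y]$ with probability at least $1-\alpha$. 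This matches the stated margin $c\sqrt{\frac{1}{2t}\ln(2/\alpha)}$.

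\textbf{Step 3 (main obstacle).} The hard part is linking the test-round corruption rate to the rate over the $d$ computation rounds against an adaptive, possibly correlated adversary. The $Y_j$ need not be independent, and the fraction of corruption in the computation rounds is a hypergeometric split of the $m$ attacked rounds. I would first handle this with Hoeffding's inequality for sampling without replacement (Serfling), which yields the same margin, since the test set is a uniformly random subset of the $n$ rounds. If the server's strategy is adaptive, I would instead use a martingale (Azuma) version conditioned on the server's view. Blindness makes the server's view independent of the round labels, so the martingale differences remain bounded by $c$ after rescaling and the margin is preserved. The remaining discrepancy between the finite-sample test rate and the computation rate would be absorbed by taking $n$ large, as in the shot-noise remark above.
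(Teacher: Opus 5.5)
Your Steps 1 and 2 are the paper's proof. Blindness and the random placement of rounds give $X_t\approx\mathrm{Binomial}(t,\gamma/c)$, Hoeffding gives $\eta=\sqrt{\ln(2/\alpha)/(2t)}$, and rescaling by $c$ gives $[\gamma_l,\gamma_u]$. One correction there: $\epsilon\ge 1/c$ makes the estimate conservative only at the upper end. If $\epsilon>1/c$, the interval covers $c\epsilon\gamma\ge\gamma$ rather than $\gamma$, so only $\gamma\le\gamma_u$ survives. The two-sided claim therefore rests on the paper's assumption that an attacked test round is detected with probability exactly $1/c$.

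The proposal goes wrong in Step 3, which goes beyond the paper; the paper simply identifies the computation-round rate with $m/n$ via $\delta\approx md/n$. Your random-placement idea does make the paper's reasoning rigorous with exactly the margin $c\eta$, but only for the overall attacked fraction $\bar a=m/n$. Let $A_t$ be the number of attacked test rounds, so $X_t\mid A_t\sim\mathrm{Binomial}(A_t,\epsilon)$. Hoeffding's convex-order comparison of the hypergeometric with the binomial then shows that the moment generating function of $X_t$ is dominated by that of $\mathrm{Binomial}(t,\epsilon\bar a)$. This does not give the same margin for $\gamma=\delta/d$, because
\[
\frac{\delta}{d}-\frac{m}{n}=-\frac{t}{d}\left(\frac{A_t}{t}-\frac{m}{n}\right),
\]
which is an additional fluctuation with standard deviation $\sqrt{t\,\bar a(1-\bar a)/(d(n-1))}$. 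For fixed $t/n$ this is of order $1/\sqrt{n}$, the same order as $c\eta$. Contrary to your last sentence, it is therefore not absorbed by taking $n$ large. A correct version either splits $\alpha$ and widens the margin, e.g.\ to $(c+t/d)\sqrt{\ln(4/\alpha)/(2t)}$, or states the result for $m/n$ (or, in the i.i.d.\ model the paper implicitly uses, for the per-round attack probability).

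The Azuma detour is unnecessary and would not fix this. After the Pauli twirl of the trapification analysis, the attack pattern is independent of the round labels and trap colours. The server also gets no feedback before the final decision, so you can simply condition on that pattern. The extra $t/d$ term comes from the random split of attacked rounds between test and computation rounds, not from adaptivity.
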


\begin{proof}
    Given $\delta \approx \frac{md}{n}$ and $\gamma=\frac{\delta}{d}$, 
we can rewrite the number of attacks detected by the traps as $X_t \approx \text{Binomial}(t,\frac{\gamma}{c})$. This holds because blindness prevents the server from distinguishing test
from computation rounds, so the $m$ attacked rounds are spread uniformly over
the $n$ rounds: each test round is then attacked with probability $\gamma$
and detected with probability $1/c$, independently across rounds.
Using Hoeffding's inequality with $\eta>0$, we have:

\[
\text{Pr}\left(\left| X_t - \frac{\gamma t}{c} \right| \geq \eta t\right)\leq 2 \exp(-2\eta^2 t)=\alpha
\]

Hence the error can be expressed as: $\eta = \sqrt{\frac{1}{2t}\ln(2/\alpha)}$ and
$\gamma\in\left[c\left(\frac{X_t}{t}-\eta\right),c\left(\frac{X_t}{t}+\eta\right)\right] \text{ with probability at least } 1-\alpha$.

\end{proof}
\vspace{-.2cm}
In the following subsection, we address the training phase of QGMs in the delegated setting.

\subsection{Delegated End-to-End Generative QML}
The DVQAs framework \cite{dvqas} works under two structural conditions: loss functions are expressible as expectation values and gradients are estimated via the Parameter-Shift Rule~\cite{Mitarai_2018}. Quantum Generative Models could violate both of these conditions: in general, sampling-based distance measures do not always reduce to expectation values directly, and gradient estimation methods different than the Parameter-Shift Rule are used across different QGMs. This structural mismatch means DVQAs cannot be applied out-of-the-box to securely delegate the training of QGMs.
Hence, rather than focusing on every specific QGM and providing related detailed theorems re-adapting the DVQAs optimization, we propose a more general framework to address secure delegation of QGMs.

We define $\mathcal{P}(\mathcal{L}, \mathcal C, \theta)$ to be a  gradient estimation method where $\mathcal{L}$ is the loss function, $\mathcal C$ the ansatz and $\theta$ the parameters of the ansatz.
We identify two looser conditions under which verifiable optimization remains achievable despite the aforementioned mismatch with the original DVQAs.
\begin{enumerate}
    \item The error introduced by corrupted rounds in the loss function $\mathcal{L}$ can be bounded as 
    a function of $\gamma$.
    \item All circuit evaluations within a gradient step defined by $\mathcal{P}$ share the 
    same MBQC graph, enabling blind shared-trap delegation.
\end{enumerate}
The first condition ensures that, independently of the loss function being used, it is possible to connect the number of corrupted rounds with the computation's output quality.
The second one ensures that, during a gradient step delegation, test and computation rounds remain indistinguishable to the server. Rounds with different graphs
would reveal structural information and consequently preclude shared delegation in the same protocol execution.
When both conditions hold, we can apply the following workflow that generalizes previous work \cite{dvqas}:
\begin{itemize}
    \item Verifiable Step:
    \begin{enumerate}
    \item Define an error measure for the gradient estimation method $\text{Err}(\mathcal{P})$ and
    provide an upper bound $h_{\mathcal P}(\gamma)$ depending on the corrupted computation rounds ratio $\gamma=\delta/d$ s.t.  
    \[\text{Err}(\mathcal{P}(\mathcal{L}, \mathcal C, \theta))< h_{\mathcal P}(\gamma)\] 
    \item Given a target error threshold $\varepsilon_{th}=h_{\mathcal P}(\gamma_{max})$, define a wrong computation when the corrupted rounds exceed a tolerable amount $\delta_{max}$ :
    \[\text{if } \delta>\delta_{max} \text{ then Err}(\mathcal{P})\geq\varepsilon_{th}\]
    \item Using the aforementioned trapification scheme \cite{dvqas,yang2025verifiableblindobservableestimation,vbqp} allows one to bound the probability of failure: 
        \begin{align*}
        \centering
            \text{Pr}[ \text{Client Accepts } \land \text{ Err}(\mathcal{P}) \geq \varepsilon_{th}] \leq \epsilon
        \end{align*}
\end{enumerate}

\item Verifiable Optimization: 
As done for Theorem III.3 in \cite{dvqas},
by looking at the convergence conditions of gradient descent, it is possible to determine $\varepsilon_{th}$ and therefore run the custom gradient descent optimization Protocol \ref{prot:Opt}, delegating iteratively the new Verifiable Step.
\end{itemize}
Together, the three requirements within the Verifiable Step above recover what the Verifiable Step
Protocol of \cite{dvqas} provides for the Parameter-Shift Rule: a threshold separating acceptable from wrong gradient estimates, and a
delegation protocol whose probability of accepting a wrong one decreases
exponentially in the number of rounds.
Note that some works use different circuits to estimate the gradients (e.g. in \cite{demidik2025samplebasedtrainingquantumgenerative}) violating the second condition. A potential solution would be to run independent calls to Protocol~\ref{prot:Sampling} for each circuit evaluation, albeit at the cost of losing the shared-trap efficiency and requiring a revised end-to-end convergence argument.
We provide here two examples of delegating training and inference of two QGMs where our hypothesis on sharable traps holds: the Quantum Circuit Born Machine (QCBM) using the MMD loss and the MBQC-based generative model \cite{Majumder_2024}.

\begin{figure*}[t]
    \centering
    \includegraphics[width=.85\linewidth]{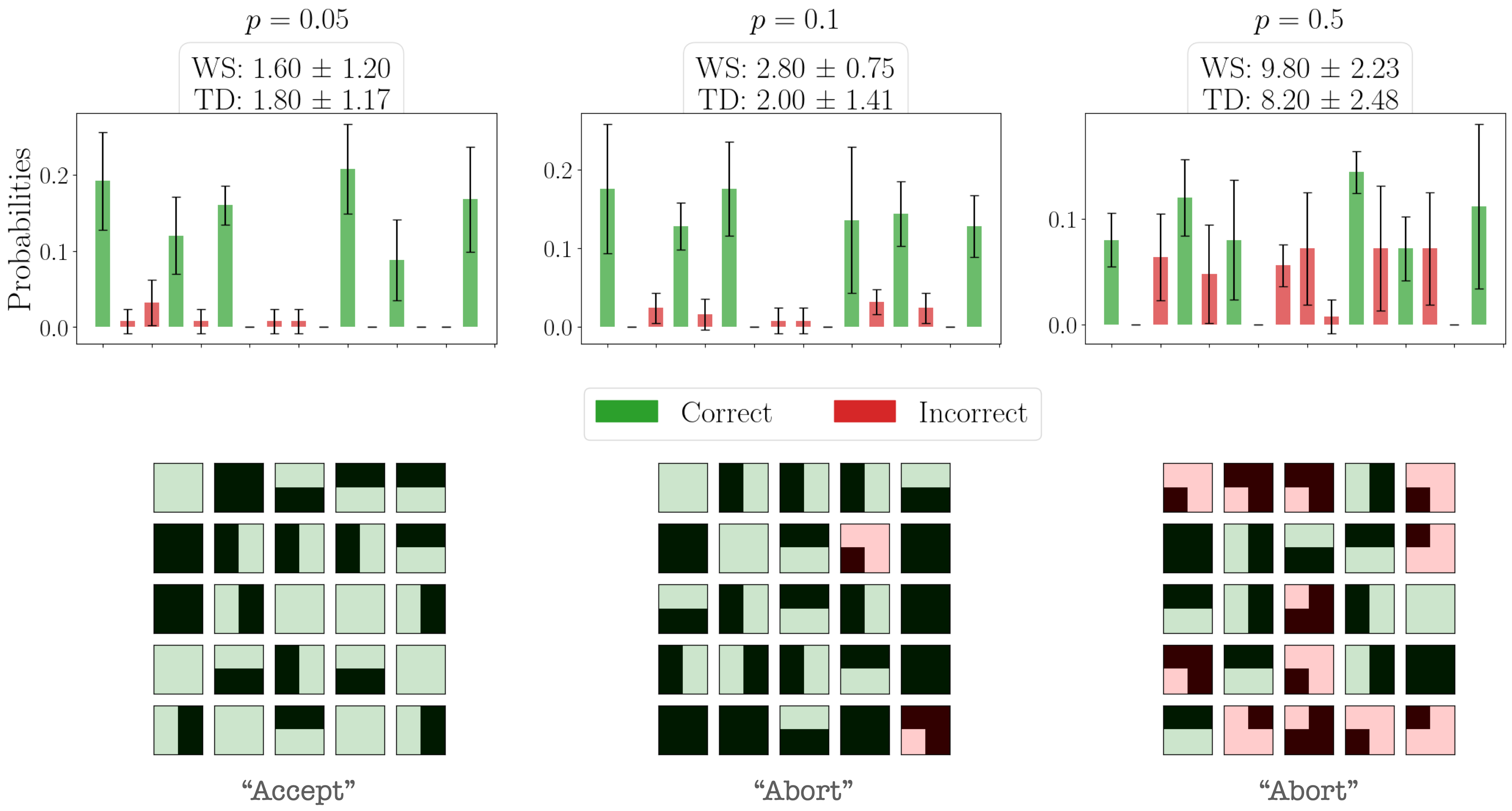}
    \caption{\textit{Upper}: $d=25$, $t=50$, $5$ Repetitions. WS: Number of wrong generated samples. TD: Number of trap detections. $p$: probability to attack a round as defined in the main text. In the x-axis we have the $2^n$ bitstrings samples where $n$ is the number of bits.
    In green we highlight the correct samples belonging to the target distribution while in red the wrong ones. \textit{Bottom}: Visualization of one of the runs from the upper figure experiment.}
    \label{fig:dqi}
\end{figure*}

\paragraph*{Example 1: QCBM+MMD}

The verifiable training follows directly from DVQAs, since the MMD gradient can be expressed as expectation values via \cite{Rudolph_2024}, making the Parameter-Shift Rule applicable with fixed circuit structure.
For the inference phase instead, Protocol~\ref{prot:Sampling} is required.

\paragraph*{Example 2: MBQC-based QGM}
For the MBQC-based generative model of \cite{Majumder_2024},  Protocol~\ref{prot:Sampling} applies directly to the inference phase. The client delegates sampling with verifiability guarantees given by Theorem~\ref{th:ver_sampl}.
Concerning training, unlike Example 1, the MMD gradient here is estimated directly from samples via Parameter-Shift Rule, without 
reduction to expectation values. This falls outside 
the direct application of DVQAs' end-to-end training and requires a new Verifiable Step.
Let $g$ be the gradient vector and $K$ the kernel bound for the MMD loss $\mathcal{L}$. We can estimate each gradient component as follows:
\[
g_i=\frac{\partial \mathcal{L}}{\partial \theta_i}
=\frac{1}{2}(\mathcal{L}_{i,+}-\mathcal{L}_{i,-})
\]
where $\mathcal{L}_{i,+}$ and $\mathcal{L}_{i,-}$ correspond to the two shifted evaluations of the loss function.
By applying Lemma~\ref{lemma:mmd} to each gradient component and given $N_P$ the number of parameters of the circuit (and consequently the dimension of the gradient vector):

\[
|g_i - \hat{g}_i| \leq \frac{2\sqrt{K}\gamma_+^i + 
2\sqrt{K}\gamma_-^i}{2}= \sqrt{K}\gamma_i
\]
where $\gamma_\pm^i$ are the corrupted shot ratios for the 
two function evaluations and $\gamma_i = \gamma_+^i + 
\gamma_-^i$. Aggregating over all parameters
and defining $\gamma$ as the global corrupted shot ratio 
across all $2N_P$ evaluations, we get:
\[
\|g - \hat{g}\|_2 \leq 
\sqrt{KN_P}\,\gamma
\]
The relative error is defined as $\varepsilon = \|g-\hat{g}\|_2 / \|g\|_2$. Given $\epsilon_0$ such that $\|g\|_2 \geq \epsilon_0  > 0$ we obtain 
$h_{\mathcal{P}}(\gamma) =  \sqrt{KN_P}\gamma / \epsilon_0$.
Given a desired error threshold $\varepsilon_{th} = h_{\mathcal{P}}(\gamma_{max})$, we achieve a verifiable delegated gradient estimation using trapification. Hence,
end-to-end verifiable training follows from using this new verifiable step within step 2 in
Protocol~\ref{prot:Opt}.

\section{Numerical Simulations}

In this section, we present numerical simulations for the inference of QGMs employing the Verifiable Sampling Protocol \ref{prot:Sampling}. Concerning the training phase, we have provided a high-level approach to extend what has already been shown in \cite{dvqas} and thus in this section, we focus only on inference.
At the inference phase, the goal of running the verifiable sampling protocol is not only to reproduce or compare the full probability distribution with the target one. In our framework the client can also be confident that, whenever no trap fails, the generated samples are computed with the chosen ansatz and follow the attack-free target distribution learned during training, up to a threshold error $\varepsilon_{th}$.
We tackle the Bars\&Stripes Problem, a common benchmark for quantum generative models, by solving the $2\times2$ problem instance with a 4-qubit Instantaneous Quantum Polynomial (IQP) Born Machine based on \cite{recioarmengol2025trainclassicaldeployquantum}.
Given the optimal angles obtained classically via \textit{IQPopt}, \cite{iqpopt} we 
run the delegated inference phase simulation with Veriphix \cite{Abdul_Sater_Veriphix}, a library for Verification and Benchmarking based on the MBQC software Graphix \cite{graphix26}. 
To simulate deviations from the ideal protocol, namely the attacks, 
we employ the \texttt{MaliciousNoiseModel} from Veriphix. This model applies dephasing noise to a selected set of qubits, independently in each round with probability $p$; for this experiment we target a single output qubit. It is important to underline that the proposed theoretical work is robust to any kind of attack model on the server side and that we use this simple model for a demonstrative purpose. 
The numerical simulations  in Fig. \ref{fig:dqi} illustrate the theoretical findings as the trap detection enables the client to abort the inference when wrong bitstrings are sampled. Furthermore, the  number of trap detections is correlated to the number of wrong samples (TD and WS respectively in Fig. \ref{fig:dqi}), supporting its use as an estimator for output quality during deployment. 
Beyond the setting of Fig.~\ref{fig:dqi}, we repeat the experiment with larger parameters ($d=500$, $t\in\{100,500\}$, $100$ repetitions) and a wider range of attack probabilities, reported in Table~\ref{tab:sweep_p}. The estimator $c\,X_t/t$
recovers the true corruption rate $p$, while its standard deviation decreases as the
number of test rounds grows.
Hence, this experiment is consistent with Proposition \ref{prop:est} and highlights potential applications of our verification protocol in the quantum benchmarking scenario.

\begin{table}[t]
\centering
\footnotesize
\setlength{\tabcolsep}{3pt}
\caption{Delegated inference under the \texttt{MaliciousNoiseModel}
attack, with $d=500$, $t\in\{100,500\}$ test rounds and a
$c$-coloring of the pattern's graph with $c=3$, over $100$ repetitions. $p$ is the per-round attack probability, $c\,X_t/t$ the
estimator of the corrupted-round ratio of Proposition~\ref{prop:est}, and ``Accept'' the fraction of
runs in which no trap failed. WS$/d$ is the measured fraction of wrong samples, averaged over both runs.}
\label{tab:sweep_p}
\begin{tabular}{cc|cc|cc}
\hline
 & & \multicolumn{2}{c|}{$t=100$} & \multicolumn{2}{c}{$t=500$} \\
$p$ & WS$/d$ & $c\,X_t/t$ & Accept & $c\,X_t/t$ & Accept \\
\hline
0.00 & 0.036 & $0.000\pm0.000$ & 1.00 & $0.000\pm0.000$ & 1.00 \\
0.02 & 0.056 & $0.023\pm0.029$ & 0.50 & $0.020\pm0.010$ & 0.02 \\
0.05 & 0.084 & $0.050\pm0.042$ & 0.21 & $0.050\pm0.017$ & 0.00 \\
0.10 & 0.128 & $0.096\pm0.056$ & 0.03 & $0.102\pm0.022$ & 0.00 \\
0.20 & 0.224 & $0.206\pm0.066$ & 0.00 & $0.205\pm0.037$ & 0.00 \\
0.35 & 0.367 & $0.354\pm0.098$ & 0.00 & $0.356\pm0.046$ & 0.00 \\
\hline
\end{tabular}
\end{table}

\section{Conclusion}
We propose a blind and verifiable protocol for delegated quantum circuit sampling and extend the DVQAs framework to Quantum Generative Models, covering both inference and training phases under information-theoretic security assumptions. Beyond the binary ``Accept/Abort" decision, the protocol provides a concrete estimation of the corrupted sample ratio, which, unlike accreditation-based bounds, does not rely on noise-model assumptions. This strengthens the link between verification and application-level benchmarking. 
The verifiable sampling protocol is general and applies beyond QGMs to any sampling-based quantum computation, and the trap detection rate serves as an operational proxy for output quality even when the target distribution is classically inaccessible. However, the end-to-end verifiable training framework requires that all circuit evaluations within a gradient step share the same MBQC graph; QGMs whose gradient estimation relies on heterogeneous circuits fall outside this assumption and will be addressed in future works.
Other future directions include robustness improvement allowing a tunable abort threshold, applications to benchmarking and exploring multi-client/server settings.

\section*{ACKNOWLEDGMENTS}

P.W. acknowledges support from the Quantum Advantage Turbo Charger grant (UKRI4257), the Quantum Advantage Pathfinder (EP/X026167/1) and the QCI3 Hub (EP/Z53318X/1) all funded by UKRI's EPSRC.  A.P. and M.I. acknowledge support from the DFG (via the Priority Program SPP2514) and the Berlin Quantum Alliance.

\bibliographystyle{IEEEtran}
\bibliography{refs}

\end{document}